\newtheorem{prop}{Proposition}
\title{\boldmath Sandwiched Renyi Relative Entropy in AdS/CFT}
\author[a,b]{Reginald J. Caginalp,}
\affiliation[a]{Berkeley Center for Theoretical Physics, University of California, Berkeley, CA 94720, USA}
\affiliation[b]{Lawrence Berkeley National Laboratory, Berkeley, CA 94720, USA}
\emailAdd{caginalp@berkeley.edu}
\abstract{We explore the role of sandwiched Renyi relative entropy in AdS/CFT and in finite-dimensional models of holographic quantum error correction. In particular, in the context of operator algebra quantum error correction, we discuss a suitable generalization of sandwiched Renyi relative entropy over finite-dimensional von Neumann algebras.  It is then shown that the equality of bulk and boundary sandwiched relative Renyi entropies is equivalent to algebraic encoding of bulk and boundary states, the Ryu-Takayanagi formula, the equality of bulk and boundary relative entropy, and subregion duality. This adds another item to an equivalence theorem between the last four items established in~\cite{HarlowQEC}. We then discuss the sandwiched Renyi relative entropy defined in terms of modular operators, and show that this becomes the definition naturally suited to the finite-dimensional models of holographic quantum error correction. Finally, we explore some numerical calculations of sandwiched Renyi relative entropies for a simple holographic random tensor network in order to obtain a better understanding of corrections to the exact equality of bulk and boundary sandwiched relative Renyi entropy. }
\begin{document} 
\maketitle
\flushbottom

\section{Introduction}

Recent work has uncovered the importance of quantum information in understanding quantum gravity. The anti-de Sitter/conformal field theory (AdS/CFT) correspondence states that any theory of quantum gravity in $(d+1)$-dimensional anti-de Sitter space is equivalent to a conformal field theory in $d$ dimensions~\cite{Maldacena,Gubser,WittenAdSCFT}. The Ryu-Takayagi formula~\cite{RT} and the Hubeney-Rangamani-Takayagi~\cite{HRT} formula show that entanglement entropies in the CFT are equal to areas of minimal or extremal surfaces in the bulk. These formulae can be derived directly from the basic AdS/CFT dictionary~\cite{RTDeriv,HRTDeriv}.  The role of the Renyi entropy, a generalization of entanglement entropy, in AdS/CFT has also been studied~\cite{DongRenyi}.

An important aspect of this connection has been the role that quantum error correction plays in the encoding of the bulk in the boundary~\cite{HarlowQEC,HarlowTASI,ADH,DHW,Happy,Donnelly,DongEntSpec,Akers,JLMS}. In particular, \cite{HarlowQEC} established a theorem demonstrating the equivalence between (i) subregion duality, (ii) the equality of bulk and boundary relative entropy, (iii) algebraic encoding, and (iv) the RT formula.

The purpose of this paper is to discuss the role of sandwiched relative Renyi entropy in holographic quantum error correction. The sandwiched Renyi entropy is a generalization of the relative entropy, which reduces to the usual relative entropy if we take the limit of the Renyi index to 1. In particular, we show that the the equivalence of bulk and boundary relative Renyi entropy is equivalent to the conditions stated above (subregion duality, algebraic encoding, the RT formula, and the equivalence between bulk and boundary relative entropy) in the context of operator algebra quantum error correction with complementary recovery. Along the way, we will define a sandwiched relative Renyi entropy in the context of finite-dimensional von Neumann algebras. We will then show that this definition follows from the definition of relative Renyi entropy on general von Neumann algebras, which is defined from modular operators. Finally, we will discuss results from numerical simulations on a simple holographic tensor model which illustrates the behavior of the sandwiched relative Renyi entropy near a phase transition. These numerical calculation are a model for approximate error correction, in contrast to the theorem we will discuss which concerns an equivalence between exact equality of bulk and boundary sandwiched Renyi relative entropies.

 The sandwiched relative Renyi entropy has previously been used to illuminate the connections between information, entanglement, gravity, and quantum field theory. For example,~\cite{Bao} discussed the bulk dual of the so-called refined relative Renyi entropy.  The quantum null energy condition (QNEC)~\cite{QNEC} can be expressed as a second shape deformation of the relative entropy of a given state and the vacuum. For free quantum field theories, it has been shown that this can be generalized to the sandwiched Renyi relative entropies~\cite{Moosa} for certain values of the Renyi index. Our work in the present paper is closely related to the recent article~\cite{KFR} on corrections to the equality of bulk and boundary relative entropy in AdS/CFT.

We begin this paper with a discussion of the basic definitions and properties of sandwiched relative Renyi entropy as well as a brief review of the theory of holographic quantum error correction. 

\section{Sandwiched Renyi Relative Entropy}

Given a Hilbert space $\mathcal H$ and two density operators $\rho, \sigma$ on the Hilbert space, the sandwiched relative Renyi entropy~\cite{SRD1,SRD2,SRD3} is defined as 
$$S_n ( \rho || \sigma) \equiv \frac{1}{n-1} \log \left [ \text{Tr} \left[ \left ( \sigma^{\frac{1-n}{2n} } \rho \sigma^{\frac{1-n}{2n} }  \right )^n\right] \right ].$$
This is also known in the literature as the sandwiched Renyi divergence (SRD).
\begin{prop}
The sandwiched relative Renyi entropy is invariant under unitary transformations. That is, given a unitary transformation $U$ on the Hilbert space $\mathcal{H}$, we have 
$$S_n ( U \rho U^\dagger ||U  \sigma U^\dagger) =S_n ( \rho || \sigma)$$
for any density matrices $\rho, \sigma$ on $\mathcal{H}$.
\end{prop}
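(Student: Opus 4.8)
The plan is to reduce the claim to two elementary facts about unitary conjugation: that it commutes with the functional calculus on positive operators, and that it leaves the trace invariant. First I would write the spectral decomposition $\sigma = \sum_i \lambda_i \ketbra{i}{i}$, so that $U\sigma U^\dagger = \sum_i \lambda_i\, U\ketbra{i}{i}U^\dagger$ is again a density operator, with the same eigenvalues but rotated eigenvectors. Since any power is defined by $\sigma^{s} = \sum_i \lambda_i^{s} \ketbra{i}{i}$ --- with the standard convention that, for $s<0$, the sum runs over $\mathrm{supp}(\sigma)$ --- it follows immediately that $(U\sigma U^\dagger)^{s} = U \sigma^{s} U^\dagger$ for every real $s$, in particular for $s = \tfrac{1-n}{2n}$.

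Next I would substitute this into the sandwiched expression and telescope the adjacent factors $U^\dagger U$, which equal the identity:
\[
\left(U\sigma U^\dagger\right)^{\frac{1-n}{2n}} \left(U\rho U^\dagger\right) \left(U\sigma U^\dagger\right)^{\frac{1-n}{2n}} = U \left(\sigma^{\frac{1-n}{2n}}\,\rho\,\sigma^{\frac{1-n}{2n}}\right) U^\dagger .
\]
Writing $A = \sigma^{\frac{1-n}{2n}}\,\rho\,\sigma^{\frac{1-n}{2n}}$, the same telescoping gives $(UAU^\dagger)^n = U A^n U^\dagger$, and then cyclicity of the trace yields $\mathrm{Tr}\!\left[U A^n U^\dagger\right] = \mathrm{Tr}\!\left[A^n\right]$. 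Since the argument of the logarithm is thereby unchanged and the prefactor $\tfrac{1}{n-1}$ does not involve the states, the proposition follows.

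I do not anticipate a genuine obstacle; the only point deserving a word of care is the behaviour of the negative power $\tfrac{1-n}{2n}$ (relevant for $n>1$) when $\sigma$ is not full rank, which is handled by the convention above together with the convention that $S_n(\rho\|\sigma) = +\infty$ when $\mathrm{supp}(\rho)\not\subseteq\mathrm{supp}(\sigma)$. Both conventions are themselves unitarily covariant, since $U$ maps $\mathrm{supp}(\sigma)$ onto $\mathrm{supp}(U\sigma U^\dagger)$, so the argument goes through verbatim in the degenerate case. An alternative and essentially equivalent route would be to invoke monotonicity of the sandwiched Renyi relative entropy under CPTP maps (the data processing inequality), applied to the channel $X \mapsto U X U^\dagger$ and its inverse $X \mapsto U^\dagger X U$; the two inequalities together force equality. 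The direct computation above is shorter and self-contained, so that is the one I would present.
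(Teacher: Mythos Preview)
Your proof is correct and follows essentially the same route as the paper: both arguments use $(U\sigma U^\dagger)^{s}=U\sigma^{s}U^\dagger$, telescope the inner $U^\dagger U$ factors, pull the unitaries outside the $n$-th power, and finish with cyclicity of the trace. You add a bit more justification (the spectral-decomposition argument for functional calculus and the remark on supports when $\sigma$ is not full rank), but the core computation is identical to the paper's.
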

\begin{proof} 
This follows from the definition: 
\begin{multline}
 S_n ( U \rho U^\dagger ||U  \sigma U^\dagger) = \frac{1}{n-1} \log \left [ \text{Tr} \left[ \left ( (U \sigma U^\dagger)^{\frac{1-n}{2n} } U \rho U^\dagger(U \sigma U^\dagger)^{\frac{1-n}{2n} }  \right )^n\right] \right ] \\
 = \frac{1}{n-1} \log \left [ \text{Tr} \left[ \left ( U \sigma ^{\frac{1-n}{2n} } U^\dagger U \rho U^\dagger U \sigma^{\frac{1-n}{2n} } U^\dagger   \right )^n\right] \right ] = \frac{1}{n-1} \log \left [ \text{Tr} \left[ \left ( U \sigma ^{\frac{1-n}{2n} }  \rho  \sigma^{\frac{1-n}{2n} } U^\dagger   \right )^n\right] \right ]\\
 = \frac{1}{n-1} \log \left [ \text{Tr} \left[  U \left (  \sigma ^{\frac{1-n}{2n} }  \rho  \sigma^{\frac{1-n}{2n} } \right )^n U^\dagger \right] \right ] = \frac{1}{n-1} \log \left [ \text{Tr} \left[  \left (  \sigma ^{\frac{1-n}{2n} }  \rho  \sigma^{\frac{1-n}{2n} } \right )^n  \right] \right ] = S_n ( \rho || \sigma).
\end{multline}
\end{proof}

In addition to being invariant under unitary transformations, the sandwiched relative Renyi entropy is always strictly positive, unless the two density matrices are the same, in which case it is zero~\cite{SRD1,SRD2,SRD3}. 

\begin{prop}
Let $\mathcal{H}$ be a Hilbert space, and let $\rho, \sigma$ be two density matrices on $\mathcal{H}$. Then $S_n ( \rho || \sigma) \geq 0$, and $S_n ( \rho || \sigma) = 0$ if and only if $\rho = \sigma$.
\end{prop}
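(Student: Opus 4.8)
The plan is to derive both assertions from the (reverse) H\"older inequality for Schatten norms together with its equality condition. Write $Q_n(\rho\|\sigma)\equiv\text{Tr}\big[(\sigma^{\frac{1-n}{2n}}\rho\,\sigma^{\frac{1-n}{2n}})^n\big]$, so that $S_n(\rho\|\sigma)=\frac{1}{n-1}\log Q_n(\rho\|\sigma)$, and set $X\equiv\sigma^{\frac{1-n}{2n}}\rho\,\sigma^{\frac{1-n}{2n}}\ge 0$, so that $Q_n=\text{Tr}[X^n]=\|X\|_n^{\,n}$. The key observation is that, by cyclicity of the trace and $\text{Tr}\,\rho=1$,
$$1=\text{Tr}\,\rho=\text{Tr}\big[\sigma^{\frac{n-1}{n}}X\big],$$
so the problem reduces to comparing $\text{Tr}[\sigma^{\frac{n-1}{n}}X]$ with $\|X\|_n=Q_n^{1/n}$.

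Consider first $n>1$. Applying H\"older's inequality for the trace with conjugate exponents $p=\frac{n}{n-1}$ and $q=n$ gives $1=\text{Tr}[\sigma^{\frac{n-1}{n}}X]\le\|\sigma^{\frac{n-1}{n}}\|_{n/(n-1)}\,\|X\|_n$, and since $\|\sigma^{\frac{n-1}{n}}\|_{n/(n-1)}=(\text{Tr}\,\sigma)^{(n-1)/n}=1$ this yields $Q_n^{1/n}\ge 1$, hence $Q_n\ge 1$ and $S_n\ge 0$ because $n-1>0$. For $0<n<1$ the identical manipulation, now invoking the \emph{reverse} H\"older inequality for positive operators with the negative conjugate exponent $p=\frac{n}{n-1}<0$, reverses the bound to $1\ge\|\sigma^{\frac{n-1}{n}}\|_{n/(n-1)}\|X\|_n=Q_n^{1/n}$, so $Q_n\le 1$ and again $S_n=\frac{1}{n-1}\log Q_n\ge 0$, this time because $n-1<0$ while $\log Q_n\le 0$.

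For the equality statement, one direction is a direct computation: if $\rho=\sigma$ then $X=\sigma^{\frac{1-n}{2n}}\sigma\,\sigma^{\frac{1-n}{2n}}=\sigma^{1/n}$, so $Q_n=\text{Tr}\,\sigma=1$ and $S_n=0$. Conversely, if $S_n(\rho\|\sigma)=0$ then $Q_n=1$, which saturates the (reverse) H\"older inequality above; its equality condition for positive operators forces $X^n$ to be proportional to $\big(\sigma^{\frac{n-1}{n}}\big)^{n/(n-1)}=\sigma$, i.e.\ $X=c\,\sigma^{1/n}$ for some constant $c>0$. Substituting into $1=\text{Tr}[\sigma^{\frac{n-1}{n}}X]=c\,\text{Tr}\,\sigma=c$ fixes $c=1$, so $\sigma^{\frac{1-n}{2n}}\rho\,\sigma^{\frac{1-n}{2n}}=\sigma^{1/n}$, and multiplying on both sides by $\sigma^{\frac{n-1}{2n}}$ gives $\rho=\sigma$.

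The remaining point, which I expect to be the only delicate one, is the treatment of supports. The identity $1=\text{Tr}[\sigma^{\frac{n-1}{n}}X]$ tacitly restricts all operators to $\text{supp}\,\sigma$ and so uses $\text{supp}\,\rho\subseteq\text{supp}\,\sigma$; for $n>1$ this is precisely the condition under which $S_n$ is finite, and when it fails $S_n=+\infty$ so the bound is trivial, whereas for $0<n<1$ one works on $\text{supp}\,\sigma$, where $\text{Tr}[\sigma^{\frac{n-1}{n}}X]=\text{Tr}[P_\sigma\,\rho]\le 1$ still suffices ($P_\sigma$ denoting the projector onto $\text{supp}\,\sigma$), and the equality case additionally forces $\text{Tr}[P_\sigma\,\rho]=1$. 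The main obstacle is therefore not the $n>1$ case, which is essentially a one-line H\"older estimate, but making the $0<n<1$ case and the equality analysis fully rigorous: the reverse H\"older inequality for noncommuting positive operators and the precise form of its saturation condition must be invoked with care; alternatively one can obtain the $0<n<1$ case by analytic continuation from $n>1$, or, as in~\cite{SRD1,SRD2,SRD3}, deduce non-negativity from the data-processing inequality for $S_n$ applied to the trace-out channel, under which $\rho,\sigma\mapsto 1$ and $S_n(1\|1)=0$.
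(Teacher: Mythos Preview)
The paper does not actually prove this proposition; it merely states it and cites~\cite{SRD1,SRD2,SRD3}. Your proposal therefore goes well beyond what the paper offers, supplying an explicit argument where the paper defers to the literature.

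Your approach via the (reverse) H\"older inequality is correct and is in fact one of the standard routes. The key identity $1=\text{Tr}\,\rho=\text{Tr}\big[\sigma^{\frac{n-1}{n}}X\big]$ is verified by cyclicity and the fact that the exponents of $\sigma$ sum to zero, and the $n>1$ case is then indeed a one-line H\"older bound with conjugate exponents $\big(\tfrac{n}{n-1},n\big)$. Your treatment of equality is also sound: saturation of tracial H\"older for positive operators forces $X^n\propto\sigma$, and the normalization pins the constant. You have correctly flagged the genuinely subtle points---the reverse H\"older inequality for $0<n<1$ (which does hold for positive operators in the Schatten setting, though one must state the exponent conventions carefully), the precise equality condition in that regime, and the support issues. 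Your suggested fallback, deducing $S_n\ge 0$ from the data-processing inequality (Proposition~5 in the paper) applied to the full trace channel, is clean and is essentially how the cited references handle it for $n\ge 1/2$; positivity for $0<n<1/2$ requires a separate argument, but your H\"older route covers that range directly once reverse H\"older is in hand.
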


Moreover, the limit as $n \rightarrow 1$ of the relative Renyi entropy is the usual relative entropy~\cite{SRD1,SRD2,SRD3}. 

\begin{prop}
Let $\mathcal{H}$ be a Hilbert space, and let $\rho, \sigma$ be two density matrices on $\mathcal{H}$. Then
$$\lim_{n \rightarrow 1} S_n ( \rho || \sigma ) = S( \rho || \sigma),$$
where 
$$S(\rho || \sigma) \equiv \text{Tr} ( \rho \log \rho) - \text{Tr} ( \rho \log \sigma)$$
is the relative entropy.
\end{prop}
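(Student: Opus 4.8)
The plan is to recognize $\lim_{n\to1}S_n(\rho\|\sigma)$ as an indeterminate $0/0$ limit and evaluate it by l'H\^opital's rule. Assume for now that $\rho$ and $\sigma$ are full rank (the general case follows by continuity when $\operatorname{supp}\rho\subseteq\operatorname{supp}\sigma$, and both sides equal $+\infty$ otherwise). Set
$$Q_n(\rho\|\sigma)\equiv\operatorname{Tr}\!\left[\big(\sigma^{\frac{1-n}{2n}}\rho\,\sigma^{\frac{1-n}{2n}}\big)^n\right],$$
so that $S_n(\rho\|\sigma)=\tfrac{1}{n-1}\log Q_n(\rho\|\sigma)$. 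The exponent $\frac{1-n}{2n}$ vanishes at $n=1$ and $\operatorname{Tr}\rho=1$, so $Q_1(\rho\|\sigma)=1$ and the quotient is of the form $0/0$ there. Since $n\mapsto Q_n(\rho\|\sigma)$ is real-analytic near $n=1$ (a trace of analytic operator-valued functions of $n$ in finite dimensions, $\sigma$ being invertible), l'H\^opital gives
$$\lim_{n\to1}S_n(\rho\|\sigma)=\left.\frac{d}{dn}\log Q_n(\rho\|\sigma)\right|_{n=1}=\left.\frac{d}{dn}Q_n(\rho\|\sigma)\right|_{n=1},$$
where the last equality uses $Q_1=1$; write $Q_1'$ for this number. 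So everything reduces to computing $Q_1'$.

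To compute $Q_1'$, write $A_n\equiv\sigma^{\frac{1-n}{2n}}\rho\,\sigma^{\frac{1-n}{2n}}$, so that $A_n^n=e^{n\log A_n}$, and use the trace identity $\frac{d}{dn}\operatorname{Tr}[e^{B(n)}]=\operatorname{Tr}[e^{B(n)}B'(n)]$, which holds for any smooth operator-valued $B(n)$ because cyclicity of the trace allows term-by-term differentiation of the exponential series. With $B(n)=n\log A_n$ one has $B'(1)=\log\rho+D\log(\rho)[X]$, where $D\log(\rho)$ is the Fr\'echet derivative of the logarithm at $\rho=A_1$ and $X\equiv\left.\frac{d}{dn}A_n\right|_{n=1}$; hence, using $e^{B(1)}=\rho$,
$$Q_1'=\operatorname{Tr}[\rho\,B'(1)]=\operatorname{Tr}[\rho\log\rho]+\operatorname{Tr}\!\left[\rho\,D\log(\rho)[X]\right].$$
A direct differentiation, using $\frac{1-n}{2n}=-\tfrac12(n-1)+O\big((n-1)^2\big)$ and the fact that $\sigma^s$ commutes with $\log\sigma$, gives $X=-\tfrac12(\rho\log\sigma+\log\sigma\,\rho)$.

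The one step that needs a little care is the identity $\operatorname{Tr}[\rho\,D\log(\rho)[X]]=\operatorname{Tr}[X]$. I would prove it from the integral representation $D\log(\rho)[X]=\int_0^\infty(\rho+t)^{-1}X(\rho+t)^{-1}\,dt$: cyclicity together with the fact that $\rho$ commutes with $(\rho+t)^{-1}$ gives $\operatorname{Tr}[\rho\,D\log(\rho)[X]]=\operatorname{Tr}\!\Big[\Big(\int_0^\infty\rho(\rho+t)^{-2}\,dt\Big)X\Big]=\operatorname{Tr}[X]$, since $\int_0^\infty\rho(\rho+t)^{-2}\,dt=\mathbf{1}$ (equivalently, in an eigenbasis of $\rho$ only the diagonal entries of $D\log(\rho)[X]$ survive the pairing with $\rho$, and these equal $X_{ii}/p_i$). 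Therefore $\operatorname{Tr}[X]=-\tfrac12\operatorname{Tr}[\rho\log\sigma+\log\sigma\,\rho]=-\operatorname{Tr}[\rho\log\sigma]$, so that $Q_1'=\operatorname{Tr}[\rho\log\rho]-\operatorname{Tr}[\rho\log\sigma]=S(\rho\|\sigma)$, as claimed. The only genuine obstacle is bookkeeping: $A_n^n$ carries $n$ in both its base and its exponent, so one must not drop the $\frac{d}{dn}\log A_n$ contribution; once the exponential/trace identity is in hand the rest is short. An equivalent, perhaps more transparent route is to expand $Q_n(\rho\|\sigma)$ directly to first order in $\varepsilon=n-1$ from the outset, which merely reorganizes the same steps.
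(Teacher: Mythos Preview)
Your proof is correct. The paper, however, does not actually prove this proposition: it merely states the result and cites the quantum-information literature \cite{SRD1,SRD2,SRD3}. So there is no ``paper's own proof'' to compare against here.

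That said, it is worth noting that the paper does carry out an analogous l'H\^opital computation later, when showing that the \emph{algebraic} sandwiched R\'enyi relative entropy
\[
S_n(\tilde\rho_a\|\tilde\sigma_a:\mathcal{M})=\frac{1}{n-1}\log\Big[\sum_\alpha p_\alpha^n q_\alpha^{1-n}\exp\big[(n-1)S_n(\rho_{a_\alpha}\|\sigma_{a_\alpha})\big]\Big]
\]
converges to the algebraic relative entropy as $n\to1$. There, the author differentiates a sum of elementary scalar functions and simply \emph{uses} the present proposition for each block. Your argument is therefore more substantive than anything in the paper on this point: you handle the genuinely noncommutative step, namely differentiating $\operatorname{Tr}[A_n^n]$ with $n$ in both base and exponent. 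Your use of $\operatorname{Tr}[e^{B}B']$ together with the identity $\operatorname{Tr}[\rho\,D\log(\rho)[X]]=\operatorname{Tr}[X]$ (proved via the resolvent integral for $\log$) is clean and correct; the caveat about restricting to full-rank $\rho,\sigma$ and recovering the general case by continuity on $\operatorname{supp}\rho\subseteq\operatorname{supp}\sigma$ is the standard way to handle support issues.
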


In addition, the sandwiched Renyi relative entropy is monotonic in the Renyi index $n$~\cite{SRD3}. 

\begin{prop}
Let $\mathcal{H}$ be a Hilbert space, and let $\rho, \sigma$ be two density matrices on $\mathcal{H}$, and suppose $n_1,n_2 \in (0, \infty) \backslash \{1\}$ with $n_1 \leq n_2.$ Then 
$$S_{n_1} ( \rho || \sigma) \leq S_{n_2} ( \rho || \sigma). $$
\end{prop}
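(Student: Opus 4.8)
The plan is to reduce the claim to the convexity of a single auxiliary function, combined with an elementary chord estimate. Write $S_n(\rho||\sigma) = g(n)/(n-1)$, where
$$g(n) := \log \text{Tr}\left[\left(\sigma^{\frac{1-n}{2n}}\rho\,\sigma^{\frac{1-n}{2n}}\right)^{n}\right],$$
which is finite for $n \in (0,\infty)$ (assuming, as usual, $\text{supp}\,\rho \subseteq \text{supp}\,\sigma$, e.g.\ $\sigma$ invertible). I would establish two facts: (a) $g(1) = \log \text{Tr}\,\rho = 0$, which is immediate since the exponent $\tfrac{1-n}{2n}$ vanishes at $n=1$; and (b) $g$ is convex on $(0,\infty)$. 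Granting these, for $n\neq 1$ we have $S_n(\rho||\sigma) = \bigl(g(n)-g(1)\bigr)/(n-1)$, which is exactly the slope of the secant of the graph of $g$ through $(1,0)$ and $(n,g(n))$, and secant slopes of a convex function are non-decreasing in each endpoint. Hence $n_1 \le n_2$ — whether both lie below $1$, both above $1$, or straddle $1$ — gives $S_{n_1}(\rho||\sigma) \le S_{n_2}(\rho||\sigma)$. (By the preceding proposition the $n\to 1$ value is the relative entropy, so the monotonicity in fact extends continuously through $n=1$, though the statement only asserts $n_1,n_2\neq 1$.)

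The elementary ingredient I would isolate as a lemma: if $g$ is convex on an interval containing $1$ with $g(1)=0$, then $n\mapsto g(n)/(n-1)$ is non-decreasing on that interval minus $\{1\}$. Indeed, for any $x<y<z$ in the domain, convexity gives
$$\frac{g(y)-g(x)}{y-x}\;\le\;\frac{g(z)-g(x)}{z-x}\;\le\;\frac{g(z)-g(y)}{z-y};$$
applying this to the three orderings of $\{n_1, 1, n_2\}$, with the point $1$ playing the role of $x$, $y$, or $z$ as appropriate, yields the comparison between $g(n_1)/(n_1-1)$ and $g(n_2)/(n_2-1)$ in each case.

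The real work is (b), which I expect to be the main obstacle. In the commuting case it is transparent: if $[\rho,\sigma]=0$, simultaneous diagonalization with eigenvalues $p_i,q_i$ gives $g(n)=\log\sum_i p_i^{\,n}q_i^{\,1-n}=\log\sum_i q_i\,e^{n t_i}$ with $t_i=\log(p_i/q_i)$, the cumulant generating function of a bounded random variable, hence convex in $n$. For general non-commuting $\rho,\sigma$ one cannot diagonalize both at once, and I would instead obtain the log-convexity of the relevant Schatten-norm functional along the line parametrized by $n$ from a complex-interpolation (Hadamard three-lines) argument applied to a suitable analytic operator family built from $z\mapsto \sigma^{\frac{1-z}{2z}}\rho\,\sigma^{\frac{1-z}{2z}}$, after a reparametrization that makes the exponents depend affinely on the interpolation variable — in the spirit of the analyticity arguments used to prove the data-processing inequality for the sandwiched divergence; the regimes $n>1$ and $0<n<1$ may call for slightly different interpolating families. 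Alternatively one may simply invoke the convexity results already available in the literature~\cite{SRD3}. With (b) in hand the claim follows immediately from the lemma.
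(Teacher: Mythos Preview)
The paper does not supply a proof of this proposition; it merely states the result and cites~\cite{SRD3}. Your proposal therefore goes well beyond the paper's own treatment. The reduction you give --- write $S_n=g(n)/(n-1)$ with $g(1)=0$ and deduce monotonicity of $S_n$ from convexity of $g$ via the three-chord inequality for convex functions --- is correct and is the standard mechanism behind such monotonicity statements. Your handling of the commuting case (cumulant generating function, hence convex) is fine.

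The only place to be careful is the non-commuting convexity step, which you rightly flag as the crux. Your interpolation sketch calls for a reparametrization making the exponent on $\sigma$ affine (e.g.\ $t=1/n$, so that $(1-n)/(2n)=(t-1)/2$), but convexity of the resulting function in $t$ does not automatically transfer to convexity of $g$ in $n$, since $n\mapsto 1/n$ is not affine; one must check carefully what the three-lines argument actually delivers and how it feeds into the secant-slope comparison you want. The published proofs of monotonicity (e.g.\ \cite{SRD1,SRD3}) organize this step somewhat differently. Since you also offer the fallback of citing~\cite{SRD3} for the hard inequality, your proposal is in the end compatible with --- and strictly more informative than --- the paper's approach of simply deferring to the literature.
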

The sandwiched relative Renyi entropy obeys the data-processing inequality for $n \geq 1/2$. That is, when we apply a quantum channel to the density matrices, the sandwiched relative Renyi entropy decreases or stays the same. See, for example,~\cite{SRD3} and references therein for a discussion. 
\begin{prop} 
Let $\mathcal{H}$ be a Hilbert space, let $\rho, \sigma$ be two density matrices on $\mathcal{H}$, let $n\geq 1/2,$ and let $\Lambda$ be a quantum channel. Then  
$$S_{n} ( \rho || \sigma) \geq S_{n} ( \Lambda[\rho] || \Lambda[\sigma]). $$
\end{prop}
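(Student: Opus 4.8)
The plan is to reduce the inequality to a single elementary operation --- the partial trace --- and then to feed in an operator-convexity fact about the trace functional $Q_n(\rho||\sigma)\equiv\text{Tr}\!\left[\left(\sigma^{\frac{1-n}{2n}}\rho\,\sigma^{\frac{1-n}{2n}}\right)^{n}\right]$, in terms of which $S_n=\frac{1}{n-1}\log Q_n$. One may assume $n\neq1$; the case $n=1$ is the classical monotonicity of relative entropy under quantum channels. By the Stinespring dilation every channel can be written as $\Lambda[X]=\text{Tr}_E\!\left[VXV^{\dagger}\right]$ for an isometry $V\colon\mathcal H\to\mathcal H'\otimes\mathcal H_E$, so it suffices to establish three facts: (a) invariance of $S_n$ under isometries, $S_n(V\rho V^{\dagger}||V\sigma V^{\dagger})=S_n(\rho||\sigma)$ --- this is the computation of Proposition~1 again, since $V\sigma V^{\dagger}$ has the same nonzero spectrum as $\sigma$ with $V$ carrying eigenvectors to eigenvectors, so $(V\sigma V^{\dagger})^{z}=V\sigma^{z}V^{\dagger}$ on the range of $V$ for every power $z$, and $V^{\dagger}V=\mathbf 1$ does the rest; (b) additivity under adjoining a fixed ancilla, $S_n(\rho\otimes\tau||\sigma\otimes\tau)=S_n(\rho||\sigma)$, immediate from multiplicativity of $Q_n$ across tensor products and $Q_n(\tau||\tau)=\text{Tr}\,\tau=1$; and (c) monotonicity under partial trace, $S_n(\rho_{AB}||\sigma_{AB})\ge S_n(\rho_A||\sigma_A)$. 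Steps (a) and (b) are routine; all the content is in (c).

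For (c) I would twirl. Writing $\{W_j\}_{j=1}^{d_B^{2}}$ for the Heisenberg--Weyl operators on $\mathcal H_B$ and $\widetilde W_j\equiv\mathbf 1_A\otimes W_j$, one has $\frac{1}{d_B^{2}}\sum_j\widetilde W_j\,\xi_{AB}\,\widetilde W_j^{\dagger}=\xi_A\otimes\frac{\mathbf 1_B}{d_B}$ for every operator $\xi_{AB}$. Combining this identity with (a), (b) and the unitary invariance of $Q_n$ (a consequence of Proposition~1), step~(c) becomes exactly the statement that $Q_n$ is jointly convex in $(\rho,\sigma)$ when $n>1$ and jointly concave when $\tfrac12\le n<1$: starting from $S_n(\rho_A||\sigma_A)=S_n(\rho_A\otimes\tfrac{\mathbf 1_B}{d_B}\,||\,\sigma_A\otimes\tfrac{\mathbf 1_B}{d_B})$, expanding the ancilla via the twirl, applying convexity (resp.\ concavity) of $Q_n$ termwise, and collapsing the $d_B^{2}$ summands with unitary invariance, one obtains $Q_n(\rho_A||\sigma_A)\le Q_n(\rho_{AB}||\sigma_{AB})$ for $n>1$ and the reverse inequality for $\tfrac12\le n<1$; since $\frac{1}{n-1}$ changes sign at $n=1$ and the logarithm is monotone, both versions amount to $S_n(\rho_A||\sigma_A)\le S_n(\rho_{AB}||\sigma_{AB})$.

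The operator-convexity input will be the genuine obstacle, and it is exactly here that the hypothesis $n\ge\tfrac12$ is used; for $n<\tfrac12$ the inequality --- and data processing --- genuinely fail. For $\tfrac12\le n<1$, joint concavity of $Q_n$ is a consequence of Lieb's concavity theorem, so the twirl reduction closes the argument. For $n>1$ the most uniform route is a different one: one proves the channel inequality directly via a complex-interpolation (Stein--Hirschman) estimate on an analytic family of operators built from $\rho$, $\sigma$ and the Stinespring isometry $V$, which yields the required Schatten-norm bound by interpolating between two easy endpoint cases --- though for the subrange $1<n\le2$ one can alternatively invoke joint convexity of $Q_n$ (Ando's convexity theorem) and keep the twirl reduction intact. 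Rather than reproduce these operator inequalities I would cite the standard treatments (\cite{SRD1,SRD2,SRD3} and references therein); the Proposition is then the combination of that analytic input with the elementary reduction of the first two paragraphs.
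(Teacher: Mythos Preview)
The paper does not actually prove this proposition; it merely states it and refers the reader to \cite{SRD3} ``and references therein.'' Your proposal therefore goes well beyond what the paper offers: you give a correct reduction via Stinespring dilation to the three elementary steps (isometry invariance, ancilla additivity, partial-trace monotonicity), and then reduce the last of these by the Heisenberg--Weyl twirl to joint convexity/concavity of the trace functional $Q_n$. This is the standard architecture of the modern proofs, and your handling of the sign flip in $\tfrac{1}{n-1}$ across $n=1$ is clean.

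Two small points of attribution. First, the joint concavity of $Q_n$ on $[\tfrac12,1)$ is not a direct corollary of Lieb's original concavity theorem (which concerns $\mathrm{Tr}\,A^{p}K^{\dagger}B^{q}K$); the sandwiched form with the outer $n$th power requires the Frank--Lieb argument, which does build on Lieb-type concavity but is a genuine extension. Second, for $n>2$ no Ando-type convexity applies directly, and one really does need either the Frank--Lieb variational proof of joint convexity for all $n>1$ or the Beigi/Riesz--Thorin interpolation you mention --- you say as much, but it is worth being explicit that the twirl reduction closes for the full range $n>1$ once Frank--Lieb is invoked, not only for $1<n\le2$. With those citations in place your outline is complete and correct; the paper itself simply defers the entire statement to the literature.
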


\section{Holographic Quantum Error Correction}

We briefly review some aspects of the connection between quantum error correction and  holography~\cite{HarlowTASI, HarlowQEC}. Consider a finite-dimensional Hilbert space $\mathcal{H}$ that factorizes as $\mathcal{H} = \mathcal{H}_A \otimes \mathcal{H}_{\bar A},$ and a subspace $\mathcal{H}_{code} \subseteq \mathcal{H},$ and a von Neumann algebra $\mathcal{M}$ acting on the code subspace $\mathcal{H}_{code}$. 

Then there exists a decomposition of $\mathcal{H}_{code}$, 
$$\mathcal{H}_{code} = \oplus_\alpha ( \mathcal{H}_{a_\alpha} \otimes \mathcal{H}_{\bar a _\alpha} )$$
such that $\mathcal{M}$ is the set of all operators of the form $\oplus_\alpha ( \mathcal{O}_\alpha \otimes I_{\bar{a}_\alpha})$ where the $\mathcal{O}_\alpha$'s are operators on $\mathcal{H}_{a_\alpha}$. In the above setup, $\mathcal{H}$ is the analog of the full Hilbert space of the conformal field theory, while $\mathcal{H}_{code}$ corresponds to a code subspace, e.g., the set of states that are perturbatively close to some smooth classical bulk geometry. $\mathcal{H}_A$ corresponds to the Hilbert space of some spatial region $A$ in the CFT, while $\mathcal{M}$ represents the set of operators with support in the entanglement wedge of $A$, $\mathcal{E}(A).$

These error correcting codes work by encoding a state on $\mathcal{M}$ in the ``physical" Hilbert space $\mathcal{H}$. In particular, there are states $\chi_\alpha$ on $\mathcal{H}_{\bar a_\alpha}$ and a unitary transfomation $U_A$ on $\mathcal{H}_A$ such that a state $\tilde \rho$ on $\mathcal{H}_{code}$ is mapped to a state with the following density matrix on $\mathcal{H}_A$: 
$$\tilde \rho_A = U_A \left [ \oplus_\alpha ( p_\alpha \rho_{a_\alpha} \otimes \chi_\alpha) \right ] U_A^\dagger.$$ 
Each of the $\rho_{a_\alpha}$'s is defined so that 
$$p_\alpha \rho_{a_\alpha} = \text{Tr}_{\bar{a}_\alpha} \tilde \rho_{\alpha \alpha},$$
 where $\tilde \rho_{\alpha \alpha}$ is the $\alpha$th block of the density matrix of $\mathcal{H}_{code}$. Furthermore, each of the $\rho_{a_\alpha}$'s is normalized so that $\text{Tr} \rho_{a_\alpha} = 1$, and $\sum_\alpha p_\alpha =1$.

The above encoding satisfies an equivalent of the Ryu-Takayanagi formula. In particular, 
$$S(\tilde \rho_A) = \text{Tr}(\tilde \rho_A \mathcal{L}_A ) + S(\rho_a: \mathcal{M}),$$
where $\mathcal{L}_A = \oplus_\alpha S(\chi_\alpha) (I_{a_\alpha} \otimes I_{\bar a _\alpha} )$ is the analog of the area operator, $S(\chi_\alpha)= - Tr( \chi_\alpha \log \chi_\alpha)$ is the von Neumann entropy of $\chi_\alpha$, and $S(\rho_a: \mathcal{M})$ is the algebraic von Neumann entropy over the algebra $\mathcal{M}$. It is defined as 
$$S(\rho_a : \mathcal{M} ) \equiv \sum_\alpha p_\alpha S(\rho_{a_\alpha}) - \sum_\alpha p_\alpha \log (p_\alpha).$$
This entropy consists of an average of the von Neumann entropy of each block, plus a ``classical'' term that is the Shannon entropy of the probability distribution $\{p_\alpha\}$.

This error correcting code also satisfies the equivalence of ``bulk'' and ``boundary'' relative entropy. To see this, consider two states on the code subspace that are encoded in the usual way:
$$\tilde \rho_A = U_A \left [ \oplus_\alpha ( p_\alpha \rho_{a_\alpha} \otimes \chi_\alpha) \right ] U_A^\dagger, \text{		}\tilde \sigma_A = U_A \left [ \oplus_\alpha ( q_\alpha \sigma_{a_\alpha} \otimes \chi_\alpha) \right ] U_A^\dagger.$$
We then compute the relative entropy, using the fact that it is invariant under unitary transformations
$$S(\tilde \rho_A || \tilde \sigma_A) = S(\oplus_\alpha ( p_\alpha \rho_{a_\alpha} \otimes \chi_\alpha) || \oplus_\alpha ( q_\alpha \sigma_{a_\alpha} \otimes \chi_\alpha))$$
In order to calculate this, we will need the logarithm of block diagonal matrices $\oplus_\alpha ( p_\alpha \rho_{a_\alpha} \otimes \chi_\alpha)$. For a block diagonal matrix, $M = \oplus_\alpha M_\alpha$, we have that $M^n = \oplus M_\alpha^n$ and so (by the Taylor series for matrix exponentiation) $\exp(M) = \oplus_\alpha \exp(M_\alpha)$. Therefore, $\log M = \oplus_\alpha \log(M_\alpha).$ Therefore, 
$$\log (\oplus_\alpha ( p_\alpha \rho_{a_\alpha} \otimes \chi_\alpha)) =\oplus_\alpha \log p_\alpha I_{a_\alpha \bar a _\alpha}+ \oplus_\alpha (\log \rho_{a_\alpha} \otimes I_{\bar{a}_\alpha})+\oplus_\alpha (I_{a_\alpha} \otimes \log \chi_{\alpha}).$$
Hence, 
$$Tr ((\oplus_\alpha ( p_\alpha \rho_{a_\alpha} \otimes \chi_\alpha)) \log (\oplus_\alpha ( p_\alpha \rho_{a_\alpha} \otimes \chi_\alpha))) = \sum_\alpha p_\alpha \log p_\alpha + \sum_\alpha p_\alpha Tr(\rho_\alpha \log \rho_{a_\alpha})+ \sum_\alpha p_\alpha Tr(\chi_\alpha \log \chi_\alpha),$$
and, similarly, 
$$Tr ((\oplus_\alpha ( p_\alpha \rho_{a_\alpha} \otimes \chi_\alpha)) \log (\oplus_\alpha ( q_\alpha \sigma_{a_\alpha} \otimes \chi_\alpha))) = \sum_\alpha p_\alpha \log q_\alpha + \sum_\alpha p_\alpha Tr(\rho_\alpha \log \sigma_{a_\alpha})+ \sum_\alpha p_\alpha Tr(\chi_\alpha \log \chi_\alpha).$$
Hence, 
\begin{multline}
S(\tilde \rho_A || \tilde \sigma_A)=Tr ((\oplus_\alpha ( p_\alpha \rho_{a_\alpha} \otimes \chi_\alpha)) \log (\oplus_\alpha ( p_\alpha \rho_{a_\alpha} \otimes \chi_\alpha))) -Tr ((\oplus_\alpha ( p_\alpha \rho_{a_\alpha} \otimes \chi_\alpha)) \log (\oplus_\alpha ( q_\alpha \sigma_{a_\alpha} \otimes \chi_\alpha))) \\
= \sum_\alpha p_\alpha \log \left ( \frac{p_\alpha} {q_\alpha} \right ) + \sum_\alpha p_\alpha S(\rho_{a_\alpha}||\sigma_{a_\alpha}) = S(\rho_a || \sigma_a : \mathcal{M}).
\end{multline}
As with the algebraic entropy considered above, this is the sum of two terms, an averaged quantum relative entropy, weighted by the probability distribution $\{p_\alpha\}$ and a ``classical'' term, which is the relative Shannon entropy of the probability distributions $\{p_\alpha\}$ and $\{q_\alpha\}$. Moreover, this algebraic relative entropy is the exact form one obtains from the theory of modular operators, as we discuss below. 

A closely related concept in this construction is that of subregion duality. That is, for all operators on $\mathcal{O} \in M$, there are operators $\mathcal{O}_A$ which acts nontrivially only on $\mathcal{H}_A$ such that $\mathcal{O} \ket \psi = \mathcal{O}_A \ket \psi$ for all $\ket \psi \in \mathcal{H}_{code}$.

It was shown in~\cite{HarlowQEC} that the existence of an encoding map, the RT formula, subregion duality, and the equivalence of bulk and boundary relative entropy are all in fact equivalent for these kinds of finite-dimensional Hilbert spaces.

\section{Sandwiched Renyi Relative Entropy in Holographic Quantum Error Correction}

The so-called $\alpha$-block decomposition described above has received considerable interest in recent years~\cite{DongEntSpec,Akers,Donnelly} and is now relatively well-understood. It is our aim in this section to discuss the sandwiched Renyi relative entropy in this quantum error-correction context and to provide a definition of sandwiched Renyi relative entropy that is suitable for this $\alpha$-block setting. In the next section, we will show that this definition also follows from a modular-theoretic definition of sandwiched Renyi relative entropy.

In particular, we will show that the equality of sandwiched Renyi relative entropy is equivalent to the four statements above. Consider the same setup as in the previous section, and consider two states $\tilde \rho$ and $\tilde \sigma$ on the code subspace encoded in the usual way 
$$\tilde \rho_A = U_A \left [ \oplus_\alpha ( p_\alpha \rho_{a_\alpha} \otimes \chi_\alpha) \right ] U_A^\dagger, \text{		}\tilde \sigma_A = U_A \left [ \oplus_\alpha ( q_\alpha \sigma_{a_\alpha} \otimes \chi_\alpha) \right ] U_A^\dagger.$$
We calculate 
\begin{multline}
S_n(\tilde \rho_A || \tilde \sigma_A) = S_n(\oplus_\alpha ( p_\alpha \rho_{a_\alpha} \otimes \chi_\alpha) || \oplus_\alpha ( q_\alpha \sigma_{a_\alpha} \otimes \chi_\alpha)) \\
= \frac{1}{n-1} \log \left [ \text{Tr} \left[ \left ( (\oplus_\alpha ( q_\alpha \sigma_{a_\alpha} \otimes \chi_\alpha))^{\frac{1-n}{2n} } (\oplus_\alpha( p_\alpha \rho_{a_\alpha} \otimes \chi_\alpha) ) (\oplus_\alpha ( q_\alpha \sigma_{a_\alpha} \otimes \chi_\alpha))^{\frac{1-n}{2n} }  \right )^n\right] \right ] \\
=\frac{1}{n-1} \log \left [ \sum_\alpha p_\alpha^n q_\alpha^{1-n} \text{Tr} \left[ \left ( \sigma_{a_\alpha}^{\frac{1-n}{2n} } \rho_{a_\alpha} \sigma_{a_\alpha}^{\frac{1-n}{2n} }  \right )^n \otimes \chi_\alpha \right ] \right ]\\
=\frac{1}{n-1} \log \left [ \sum_\alpha p_\alpha^n q_\alpha^{1-n} \text{Tr} \left[ \left ( \sigma_{a_\alpha}^{\frac{1-n}{2n} } \rho_{a_\alpha} \sigma_{a_\alpha}^{\frac{1-n}{2n} }  \right )^n \right ] \cdot \text{Tr}(\chi_\alpha) \right ]\\
=\frac{1}{n-1} \log \left [ \sum_\alpha p_\alpha^n q_\alpha^{1-n} \text{Tr} \left[ \left ( \sigma_{a_\alpha}^{\frac{1-n}{2n} } \rho_{a_\alpha} \sigma_{a_\alpha}^{\frac{1-n}{2n} }  \right )^n \right ] \right ],
\end{multline}
where in the first step, we have used the invariance of sandwiched relative Renyi entropy under unitary transformations, as discussed above. Note that we can write this as 
$$S_n(\tilde \rho_A || \tilde \sigma_A) = \frac{1}{n-1} \log \left [ \sum_\alpha p_\alpha^n q_\alpha^{1-n} \exp [ (n-1) S_n( \rho_{a_\alpha} || \sigma_{a_\alpha})] \right ].$$
As we discuss below, this is exactly the same as the Renyi relative entropy derived for such finite-dimensional von Neumann algebras using modular operators. (Similarly, the modular-theoretic definition of relative entropy reduces to the one defined above for finite-dimensional von Neumann algebras.) Note that this algebraic entropy is not of the form of a classical term plus a weighted average of the quantum entropies of each $\alpha$-block. However, note that the classical relative Renyi entropy of two probability distributions $\{p_\alpha\}$ and $\{q_\alpha\}$ is given by 
$$S_n ( \{p_\alpha\}|| \{q_\alpha\} ) = \frac{1}{n-1} \log \left [ \sum_\alpha p_\alpha^n q_\alpha^{1-n} \right ].$$
This means that when the quantum states are purely classical probability distributions, our algebraic sandwiched relative Renyi entropy reduces to the classical relative Renyi entropy. Moreover, it is clear that when we only have one $\alpha$-block, the algebraic sandwiched relative Renyi entropy reduces to the usual sandwiched relative Renyi entropy defined above. This is exactly as we would expect, in analogy with the corresponding special cases for the algebraic entropy and relative entropy described above.

The classical relative Renyi entropy described above is always greater than or equal to zero, and it is zero if and only if the probability distributions are identical, $p_\alpha = q_\alpha$. We claim that our algebraic relative Renyi entropy is greater than or equal to zero, and that it is zero if and only if the states are the same. To see this, consider the case where $n>1$. $S_n( \rho_{a_\alpha} || \sigma_{a_\alpha}) \geq 0$, so 
\begin{multline}
S_n(\tilde \rho_a || \tilde \sigma_a : \mathcal{M}) = \frac{1}{n-1} \log \left [ \sum_\alpha p_\alpha^n q_\alpha^{1-n} \exp [ (n-1) S_n( \rho_{a_\alpha} || \sigma_{a_\alpha})] \right ] \\
\geq \frac{1}{n-1} \log \left [ \sum_\alpha p_\alpha^n q_\alpha^{1-n}  \right ] = S_n ( \{p_\alpha\}|| \{q_\alpha\} ).
\end{multline}
since $\log$ is a monotone increasing function. Therefore, $S_n(\tilde \rho_a|| \tilde \sigma_a: \mathcal{M})\geq 0$. Moreover, the inequality in the above expression is saturated (i.e., $S_n(\tilde \rho_a || \tilde \sigma_a : \mathcal{M})=S_n ( \{p_\alpha\}|| \{q_\alpha\} )$) if and only if $\rho_{a_\alpha} = \sigma_{a_\alpha}$ for all $\alpha$. Also, as discussed above, $S_n ( \{p_\alpha\}|| \{q_\alpha\} )=0$ if and only if the probability distributions are identical, $p_\alpha = q_\alpha$. Therefore, $S_n(\tilde \rho_a || \tilde \sigma_a: \mathcal{M}) =0$ if and only if $\rho_{a_\alpha} = \sigma_{a_\alpha}$ and $p_\alpha = q_\alpha$ for all $\alpha$, i.e., the states are identical, as claimed. The case with $n<1$ is similar. 

In addition, we claim that our algebraic sandwiched Renyi relative entropy limits to the algebraic relative entropy defined above as $n \rightarrow 1$, just as the sandwiched Renyi relative entropy converges to the relative entropy when $n\rightarrow 1$. 
Note that 
$$\lim_{n \rightarrow 1} S_n( \rho_{a_\alpha} || \sigma_{a_\alpha}) = S( \rho_{a_\alpha} || \sigma_{a_\alpha})$$
for all $\alpha.$ We have 
$$\lim_{n \rightarrow 1}  S_n(\tilde \rho_a || \tilde \sigma_a : \mathcal{M}) = \lim_{n \rightarrow 1}  \frac{1}{n-1} \log \left [ \sum_\alpha p_\alpha^n q_\alpha^{1-n} \exp [ (n-1) S_n( \rho_{a_\alpha} || \sigma_{a_\alpha})] \right ],$$
and 
$$ \lim_{n \rightarrow 1}  \log \left [ \sum_\alpha p_\alpha^n q_\alpha^{1-n} \exp [ (n-1) S_n( \rho_{a_\alpha} || \sigma_{a_\alpha})] \right ] = \log [ \sum_\alpha p_\alpha]  =0. $$
Therefore, by L'Hospital's rule, 
\begin{multline}
\lim_{n \rightarrow 1}  S_n(\tilde \rho_a || \tilde \sigma_a : \mathcal{M}) = \lim_{n \rightarrow 1}  \frac{1}{n-1} \log \left [ \sum_\alpha \exp [ n \log p_\alpha - (n-1) \log q_\alpha +(n-1) S_n( \rho_{a_\alpha} || \sigma_{a_\alpha})] \right ] \\
=\lim_{n \rightarrow 1}  \frac{\sum_\alpha \exp [ n \log p_\alpha - (n-1) \log q_\alpha +(n-1) S_n( \rho_{a_\alpha} || \sigma_{a_\alpha})] ( \log p_\alpha - \log q_\alpha + S_n ( \rho_{a_\alpha} || \sigma_{a_\alpha}) ) }  {\sum_\alpha \exp [ n \log p_\alpha - (n-1) \log q_\alpha +(n-1) S_n( \rho_{a_\alpha} || \sigma_{a_\alpha})]}  \\ 
+ \lim_{n \rightarrow 1}  \frac{\sum_\alpha \exp [ n \log p_\alpha - (n-1) \log q_\alpha +(n-1) S_n( \rho_{a_\alpha} || \sigma_{a_\alpha})] (n-1) \partial_n S_n ( \rho_{a_\alpha} || \sigma_{a_\alpha}) }  {\sum_\alpha \exp [ n \log p_\alpha - (n-1) \log q_\alpha +(n-1) S_n( \rho_{a_\alpha} || \sigma_{a_\alpha})]}  \\
= \frac{\sum_\alpha \exp [ \log p_\alpha ] ( \log p_\alpha - \log q_\alpha + S ( \rho_{a_\alpha} || \sigma_{a_\alpha}) ) }  {\sum_\alpha \exp [  \log p_\alpha ]} = \sum_\alpha p_\alpha \log \left ( \frac{p_\alpha}{q_\alpha} \right ) + \sum_\alpha p_\alpha S ( \rho_{a_\alpha} || \sigma_{a_\alpha}),
\end{multline}
which is the algebraic relative entropy, $S(\rho_a || \sigma_a: \mathcal{M})$, as claimed.

Now, as noted above, it was previously established that subregion duality, algebraic encoding, equality of bulk and boundary relative entropy, and the RT formula are all equivalent for this setting. We have shown that the algebraic encoding implies the equality of bulk and boundary sandwiched relative Renyi entropy (using the algebraic definition discussed above). Equality of bulk and boundary sandwiched relative Renyi entropy implies the equality of bulk and boundary relative entropy, by taking the limit $n \rightarrow 1$, since the algebraic sandwiched relative Renyi entropy converges to the algebraic relative entropy in this limit. Therefore, the equality of bulk and boundary sandwiched relative Renyi entropy is also equivalent to subregion duality, algebraic encoding, equality of bulk and boundary relative entropy, and the RT formula.

\section{Sandwiched Renyi Relative Entropy using Modular Operators}
We now discuss relative entropies and relative Renyi entropies using the theory of modular operators, closely following~\cite{Witten, Lashkari}. We begin by reviewing modular operators. 

Consider a Hilbert space $\mathcal{H}$ that has the following structure,
$$\mathcal{H} = \oplus_{A} (\mathcal{H}_A \otimes \mathcal{H}_{\bar A}),$$
and for simplicity we consider the case where $\dim \mathcal{H}_A = \dim \mathcal{H}_{\bar A}.$ Consider two (normalized) states, $\ket \Psi, \ket \Phi \in \mathcal{H}.$ We can write 
$$\ket \Psi = \sum_A r_A \ket {\psi_A},$$
where each $\ket{ \psi_A}$ is normalized, and $\sum_A |r_A|^2 = 1.$ We can write each $\ket {\psi_A}$ as 
$$\ket {\psi_A} = \sum_i c(i,A) \ket{i,i,A},$$
where $\ket{i,i,A} = \ket{i,A} \otimes \ket{i,A}'$, and $\{ \ket{i,A} \}$ is an orthonormal basis for $\mathcal{H}_A$, and $\{ \ket{i,A}' \}$ is an orthonormal basis for $\mathcal{H}_{\bar{A}}$. Each $\ket {\psi_A}$ is normalized, so $\sum_i |c(i,A)|^2 = 1.$ In the exact same way, we can write 
$$\ket \Phi = \sum_A s_A \ket {\phi_A}, \ket {\phi_A} = \sum_\alpha d(\alpha,A) \ket{\alpha,\alpha,A},$$
where $\ket{\alpha,\alpha,A} = \ket{\alpha,A} \otimes \ket{\alpha,A}'$, and $\{ \ket{\alpha,A} \}$ is an orthonormal basis for $\mathcal{H}_A$, and $\{ \ket{\alpha,A}' \}$ is an orthonormal basis for $\mathcal{H}_{\bar{A}}$. Once again, we have the normalization conditions
$$\sum_A |s_A|^2 = 1, \sum_\alpha |d(\alpha,A)|^2 = 1.$$ Consider the algebra of operators $\mathcal{A}$ defined by operators on $\mathcal{H}$ that are block-diagonal and have the form $\oplus_A  (\mathcal{O}_A \otimes I_{\bar{A}})$, where $I_{\bar{A}}$ is the identity operator on $\mathcal{H}_{\bar{A}}$. We first determine the relative Tomita operator, $S_{\Psi || \Phi}$, which is defined by
$$S_{\Psi || \Phi} \mathcal{O} \ket \Psi =  \mathcal{O}^\dagger \ket \Phi \text{	for all  } \mathcal{O} \in \mathcal{A}.$$
Consider an operator $\mathcal{O} \in \mathcal{A}$, which we can write as $\oplus_A  (\mathcal{O}_A \otimes I_{\bar{A}})$, that acts as follows: 
$$\mathcal{O}\ket{i,A} = \ket{\alpha, A}$$
and 
$$\mathcal{O}_A \ket{j,B} = 0$$
for all $j \neq i$, and for all $B \neq A$. Then the adjoint acts as
$$\mathcal{O}^\dagger \ket{\alpha,A} = \ket{i, A}$$
and 
$$\mathcal{O}^\dagger \ket{\beta,B} = 0$$
for all $\beta \neq \alpha$, and for all $B \neq A$. Therefore, 
$$\mathcal{O} \ket \Psi =r_A c(i,A) \ket{\alpha,i,A},  \mathcal{O}^\dagger \ket \Phi =s_A d(\alpha,A) \ket{i,\alpha,A}.$$
Hence, the relative Tomita operator acts as 
$$S_{\Psi || \Phi} \ket {\alpha, i, A} = \frac{s_A}{r_A} \frac{d(\alpha,A)}{c(i,A)}\ket{i, \alpha, A}.$$
The relative Tomita operator is an anti-linear operator, so the adjoint acts as 
$$S_{\Psi || \Phi}^\dagger \ket { i,\alpha, A} = \frac{s_A^*}{r_A^*} \frac{d(\alpha,A)^*}{c(i,A)^*}\ket{ \alpha,i,  A}.$$
Using this, we can compute the relative modular operator, $\Delta_{\Psi || \Phi} = S_{\Psi || \Phi}^\dagger S_{\Psi || \Phi}.$ We find 
$$\Delta_{\Psi || \Phi} \ket {\alpha, i, A} = \frac{|s_A|^2}{|r_A|^2} \frac{|d(\alpha,A)|^2}{|c(i,A)|^2}\ket{\alpha,i, A}.$$
Define $ \sigma_A$ to be the reduced density matrix of $\ket {\phi_A}$ on $\mathcal{H}_{A}$, and $\bar \rho_A$ to be the reduced density matrix of $\ket {\psi_A}$ on $\mathcal{H}_{\bar A},$ and define $p_A \equiv |r_A|^2, q_A \equiv |s_A|^2$. With these definitions, and the expressions above for $\ket{\psi_A}$ and $\ket{\phi_A}$, we obtain 
$$\Delta_{\Psi || \Phi} = \oplus_A \left (\frac{q_A}{p_A} \sigma_A \otimes \bar \rho_A^{-1} \right ).$$
Having computed the relative modular operator, we now turn to the calculation of algebraic relative entropy.
Recall that the Araki definition of relative entropy over our von Neumann algebra $\mathcal{A}$ is given by
$$S(\Psi || \Phi: \mathcal{A}) \equiv - \bra{\Psi} \log \Delta_{\Psi || \Phi: \mathcal{A}} \ket{\Psi}.$$
Now, for a block diagonal matrix, $M = \oplus_A M_A$, $M^n = \oplus_A M_A^n$ so that (by the Taylor series definition of matrix exponentiation) $\exp M = \oplus_A \exp M_A.$ which means that for a block-diagonal matrix $M = \oplus_A M_A$, $\log M = \oplus_A \log M_A.$ Therefore, 
\begin{multline}
S(\Psi || \Phi: \mathcal{A}) = - \bra{\Psi} \oplus_A \log  \left (\frac{q_A}{p_A} \sigma_A \otimes \bar \rho_A^{-1} \right ) \ket{\Psi}=- \bra{\Psi} \oplus_A \log  \left (\frac{q_A}{p_A} \sigma_A \otimes \bar \rho_A^{-1} \right ) \ket{\Psi} \\
=- \bra{\Psi} \oplus_A \left [ \log  \left (\frac{q_A}{p_A} \right ) I_A \otimes  I_{\bar A} \right ]\ket{\Psi}  - \bra{\Psi} \oplus_A \log  \left (\sigma_A \otimes \bar \rho_A^{-1} \right ) \ket{\Psi}.
\end{multline}
Now, 
$$\log  \left (\sigma_A \otimes \bar \rho_A^{-1} \right ) = \log  \sigma_A \otimes \bar I_{\bar A} - I_A \otimes  \log \bar\rho_A,$$
and the first term in $S(\Psi || \Phi: \mathcal{A})$ becomes 
\begin{multline}- \bra{\Psi} \oplus_A \left [ \log  \left (\frac{q_A}{p_A} \right ) I_A \otimes  I_{\bar A} \right ] \ket{\Psi} = - \sum_A \bra{\psi_A} r_A^* r_A \log \left ( \frac{q_A}{p_A} \right ) \ket {\psi_A}\\
= - \sum_A p_A \log \left ( \frac{q_A}{p_A} \right) = \sum_A p_A \log \left ( \frac{p_A}{q_A} \right).  
\end{multline}
Therefore, the relative entropy becomes:
\begin{multline}
S(\Psi || \Phi: \mathcal{A}) = \sum_A p_A \log \left ( \frac{p_A}{q_A} \right) -\bra{\Psi} \oplus_A \log  \left (\sigma_A \otimes \bar \rho_A^{-1} \right ) \ket{\Psi} \\ 
=\sum_A p_A \log \left ( \frac{p_A}{q_A} \right) + \bra{\Psi} \oplus_A  (I_A \otimes  \log \bar\rho_A) \ket{\Psi}-\bra{\Psi} \oplus_A ( \log \sigma_A \otimes  I_{ \bar A}) \ket{\Psi} \\
=\sum_A p_A \log \left ( \frac{p_A}{q_A} \right) + \sum_A p_A \bra{\psi_A}  I_A \otimes  \log \bar\rho_A\ket{\psi_A}-\sum_A p_A \bra{\psi_A}   \log \sigma_A \otimes  I_{ \bar A} \ket{\psi_A} \\
=\sum_A p_A \log \left ( \frac{p_A}{q_A} \right) + \sum_A p_A \text{Tr} ( \bar \rho_A \log \bar\rho_A)-\sum_A p_A \text{Tr} ( \rho_A   \log \sigma_A ).
\end{multline}
Now, each $\ket{\psi_A}$ is a pure state for each $A$, so $\text{Tr} ( \bar \rho_A \log \bar\rho_A) = \text{Tr} ( \rho_A \log \rho_A)$ for each $A$, where $\rho_A$ is of course the reduced density matrix of $\ket{\psi_A}$ on the Hilbert space $\mathcal{H}_A$. Thus, 
\begin{multline}
S(\Psi || \Phi: \mathcal{A}) =\sum_A p_A \log \left ( \frac{p_A}{q_A} \right) + \sum_A p_A \text{Tr} (  \rho_A \log \rho_A)-\sum_A p_A \text{Tr} ( \rho_A   \log \sigma_A ) \\
 = \sum_A p_A \log \left ( \frac{p_A}{q_A} \right) + \sum_A p_A \left [ \text{Tr} (  \rho_A \log \rho_A)- \text{Tr} ( \rho_A   \log \sigma_A ) \right ].
\end{multline}
This is a sum of a purely classical ``relative entropy'' of the probability distributions $\{p_A\}$ and $\{q_A\}$, and a weighted sum of the quantum relative entropies of the density matrices $\rho_A$ and $\sigma_A$, weighed by the probabilities $p_A$. This is the usual definition of algebraic relative entropy for this type of von Neumann algebra. Moreover, the above expression is the type of relative entropy that appears in the equality of bulk and boundary relative entropies in the context of holographic operator algebra quantum error correction.

The algebraic Renyi relative entropy is defined by Lashkari~\cite{Lashkari} using modular operators as 
$$S_\alpha( \Phi || \Omega : \mathcal{A} ) = \frac{1}{\alpha} \sup_{\Psi \in \mathcal{H}} \log \bra{ \Phi} \Delta_{\Omega || \Psi}^{\alpha} \ket{\Phi},$$
for $\alpha>0$, and similarly for $\alpha<0$. Note that the index $\alpha$ is different than the index $n$ above -- they are related by $\alpha = \frac{n-1}{n}.$ In order to analyze this, we first need to discuss various matrix norms. For a matrix $X$, its $p$-norm is defined to be 
$$\norm{X}_p \equiv Tr(|X|^p)^{1/p}.$$
This norm satisfies Holder's inequality
$$\norm{XY}_1\leq \norm{X}_p \norm{Y}_q,$$
where $p$ and $q$ satisfy $\frac{1}{p}+\frac{1}{q} =1$, $p,q>1$.
Now, $Tr(XC) \leq \norm{XC}_1$ so Holder's inequality tells us
$$ \sup_{\norm{C}_q = 1} Tr(XC) \leq \norm{XC}_1 \leq \norm{X}_p \norm{C}_q = \norm{X}_p,$$
where $\frac{1}{p}+\frac{1}{q} = 1$.
The matrix $X$ has polar decomposition $X=U |X|$. Consider an operator $C_0=A |X|^{p/q} U^\dagger,$ where $A$ is a constant. We have that 
$$\norm{C_0}_q = A \cdot Tr(|X|^p)^{1/q}.$$
By requiring that $\norm{C_0}_q = 1$, we find $A = \frac{1}{Tr(|X|^p)^{1/q}}$ so that 
$$C_0 = \frac{|X|^{p/q}U^\dagger}{Tr(|X|^p)^{1/q}}.$$
We then find 
$$Tr(XC_0) =\frac{Tr(U |X| |X|^{p/q} U^\dagger )}{Tr(|X|^p)^{1/q}}=\frac{Tr(|X|^{1+p/q} )}{Tr(|X|^p)^{1/q}}.$$
Now, $\frac{1}{p}+\frac{1}{q} = 1,$ so 
$$Tr(XC_0) =\frac{Tr(|X|^{p} )}{Tr(|X|^p)^{1/q}} = Tr(|X|^p)^{1-1/q} = Tr(|X|^p)^{1/p} = \norm{X}_p.$$
We know that $ \sup_{\norm{C}_q = 1} Tr(XC) \leq  \norm{X}_p,$ and that $C_0$ saturates this inequality with $\norm{C_0}_q=1$, so we conclude that
$$ \sup_{\norm{C}_q = 1} Tr(XC) =  \norm{X}_p.$$
Now, recall that we wrote our state $\ket \Phi$ as $\ket \Phi = \sum_A s_A \ket{\phi_A}$. Each of the states $\ket{\phi_A}$ can in turn be written as $\ket{\phi_A} = \sum_{\alpha} d(\alpha,A) \ket{\alpha, \alpha, A}.$ The reduced density matrix of $\ket {\phi_A}$ on $\mathcal{H}_A$ is then 
$$\sigma_A = \sum_\alpha |d(\alpha,A)|^2 \ket{\alpha,A} \bra{\alpha, A}.$$
Therefore, for full rank $\sigma_A$, 
$$\ket{\phi_A} = (\sigma_A^{1/2} \otimes I_{\bar A} )\sum_\alpha \ket{\alpha, \alpha, A}.$$
Now, let $X$ and $Y$ be two Hermitian operators, so that 
\begin{multline}\bra{\phi_A} X \otimes Y \ket{\phi_A} = \sum_{\alpha, \beta} \bra{\alpha, \alpha, A}\sigma_A^{1/2} X \sigma_A^{1/2} \otimes Y \ket{\beta, \beta, A} = \sum_{\alpha, \beta} \bra{\alpha, A}\sigma_A^{1/2} X \sigma_A^{1/2} \ket{\beta, A} \bra{\beta,A}  Y \ket{\alpha, A} \\
=  \sum_{\alpha, \beta} \bra{\alpha, A}\sigma_A^{1/2} X \sigma_A^{1/2} Y \ket{\alpha, A} = \text{Tr}_{\mathcal{H}_A} (\sigma_A^{1/2} X \sigma_A^{1/2} Y).
\end{multline}
We can apply this result to the algebraic relative Renyi entropy. We use the same notation for the expansions of $\ket \Psi$ and $\ket \Phi$ as before. We can write the state $\ket \Omega$ as 
$$\ket \Omega = \sum_A t_A \ket{ \omega_A},$$
and write $\ket{\omega_A}$ as 
$$\ket{\omega_A} = \sum_\mu g(\mu,A) \ket{\mu,\mu, A},$$
where $\ket{\mu,\mu,A} = \ket{\mu,A} \otimes \ket{\mu,A}'$, $\{\ket{\mu,A}\}$ is an orthonormal basis for $\mathcal{H}_{A}$, and $\{\ket{\mu,A}'\}$ is an orthonormal basis for $\mathcal{H}_{\bar A}$. We have our usual normalization conditions 
$$\sum_A |t_A|^2 =1, \sum_\mu |g(\mu,A)|^2=1.$$
We write $\tau_A$ for the reduced density matrix of $\ket{\omega_A}$ on $\mathcal{H}_A$, and $\bar \tau_A$ for the reduced density matrix of $\ket{\omega_A}$ on $\mathcal{H}_{\bar A}.$

Recall that our modular operator $\Delta_{\Omega || \Psi}$ is given by 
$$\Delta_{\Omega || \Psi} = \oplus_A \left (\frac{p_A}{w_A} \rho_A \otimes \bar \tau_A^{-1} \right ),$$
where $w_A \equiv |t_A|^2$.
The algebraic relative Renyi entropy is then given by 
$$S_\alpha( \Phi || \Omega : \mathcal{A} ) = \frac{1}{\alpha} \sup_{\Psi | \langle \Psi | \Psi \rangle =1} \log \bra{ \Phi} \Delta_{\Omega || \Psi} \ket{\Phi}= \frac{1}{\alpha} \sup_{\Psi | \langle \Psi | \Psi \rangle =1} \log \left( \sum_A q_A \frac{p_A^\alpha}{w_A^\alpha} \bra{ \phi_A} \rho_A^{\alpha} \otimes \bar \tau_A^{-\alpha}  \ket{\phi_A} \right).$$
We can use our result above to write this as 
\begin{multline}
S_\alpha( \Phi || \Omega : \mathcal{A} ) =  \frac{1}{\alpha} \sup_{\Psi | \langle \Psi | \Psi \rangle =1} \log \left( \sum_A q_A \frac{p_A^\alpha}{w_A^\alpha} \text{Tr}_{\mathcal{H}_A} (\sigma_A^{1/2} \rho_A^{\alpha} \sigma_A^{1/2}  \tau_A^{-\alpha}  ) \right)\\
=  \frac{1}{\alpha} \sup_{\Psi | \langle \Psi | \Psi \rangle =1} \log \left( \sum_A q_A \frac{p_A^\alpha}{w_A^\alpha} \text{Tr}_{\mathcal{H}_A} ( \rho_A^{\alpha} \sigma_A^{1/2}  \tau_A^{-\alpha} \sigma_A^{1/2} ) \right).
\end{multline}
We are taking the supremum over all $\ket \Psi \in \mathcal{H}$ that are normalized. This is equivalent to, in the notation used above, $\sum_A p_A = 1$ and $Tr(\rho_A)=1$ for all $A$. So, we need to take the supremum over all probability distributions $\{p_A\}$ (which we write as $\sup_{\{p_A\}}$, where it is understood that we are taking the supremum over all $p_A$ with all $p_A>0$ and $\sum_A p_A = 1$), and the supremum over all normalized density matrices $\rho_A$.   Define $\eta_A = \rho_A^\alpha$, so that $Tr(\rho_A) = 1$ is equivalent to $\norm{\eta}_{1/\alpha} = 1$. Putting this all together, 
$$S_\alpha( \Phi || \Omega : \mathcal{A} ) =  \frac{1}{\alpha} \sup_{\{p_A\}} \log \left( \sum_A q_A \frac{p_A^\alpha}{w_A^\alpha} \sup_{\eta_A | \norm{\eta_A}_{1/\alpha}=1} \text{Tr}_{\mathcal{H}_A} ( \eta_A \sigma_A^{1/2} \tau_A^{-\alpha} \sigma_A^{1/2} ) \right).$$
Now define $X_A= \tau_A ^{-\alpha/2} \sigma_A^{1/2} $ so that (using the equality of sup-norm and matrix norm derived above)
\begin{multline}
\sup_{\eta_A | \norm{\eta_A}_{1/\alpha}=1} \text{Tr}_{\mathcal{H}_A} ( \eta_A \sigma_A^{1/2} \bar \tau_A^{-\alpha} \sigma_A^{1/2} ) = \sup_{\eta_A | \norm{\eta_A}_{1/\alpha}=1} \text{Tr}_{\mathcal{H}_A} ( \eta_A X_A^\dagger X_A ) = \norm{X_A^\dagger X_A}_{\frac{1}{1-\alpha}} =\norm{X_A X_A^\dagger}_{\frac{1}{1-\alpha}} \\
= \left (Tr\left [ (\tau_A^{-\alpha/2} \sigma_A \tau_A^{-\alpha/2})^{\frac{1}{1-\alpha}} \right ] \right )^{1-\alpha } = \exp ( \alpha S_\alpha(\sigma_A || \tau_A)),
\end{multline}
where 
$$S_\alpha(\sigma_A || \tau_A) =  \frac{1-\alpha}{\alpha}\log \left (Tr\left [ (\tau_A^{-\alpha/2} \sigma_A \tau_A^{-\alpha/2})^{\frac{1}{1-\alpha}} \right ] \right )$$
is the sandwiched relative Renyi entropy defined earlier, with a different index, $\alpha$, related to $n$ by $\alpha = \frac{n-1}{n}.$
Therefore,
$$S_\alpha( \Phi || \Omega : \mathcal{A} ) =\frac{1}{\alpha} \max_{\{p_A\}} \log \left( \sum_A q_A \frac{p_A^\alpha}{w_A^\alpha} \exp ( \alpha S_\alpha(\sigma_A || \tau_A)  )\right).$$
We need to maximize the quantity in the logarithm, subject to the constraint that $\sum_A p_A=1$. To do this, we introduce a Lagrange multiplier $\lambda$ to enforce the constraint. We then define 
$$f(p_A,\lambda) \equiv \sum_A q_A \frac{p_A^\alpha}{w_A^\alpha} \exp ( \alpha S_\alpha(\sigma_A || \tau_A)  ) - \lambda \left ( \sum_A p_A -1 \right ).$$
The maximum will then be given by the solution to the system 
$$\frac{\partial f}{\partial p_A} =0, \sum_A p_A =1.$$
This gives
$$q_A \frac{p_A^{\alpha-1}}{w_A^\alpha}\exp ( \alpha S_\alpha(\sigma_A || \tau_A)  )  = \lambda, \sum_A p_A =1,$$
which means
$$p_A^{\alpha-1}  = \lambda \frac{w_A^\alpha}{q_A} \exp (- \alpha S_\alpha(\sigma_A || \tau_A)  ),$$
$$p_A^{\alpha}  = \lambda^{ \frac{\alpha}{\alpha-1}} \frac{w_A^{ \frac{\alpha^2}{\alpha-1}}}{q_A^{ \frac{\alpha}{\alpha-1}}} \exp (- \frac{\alpha^2}{\alpha-1} S_\alpha(\sigma_A || \tau_A)  ).$$
Therefore, 
\begin{multline}
S_\alpha( \Phi || \Omega : \mathcal{A} ) = \frac{1}{\alpha}  \log \left( \sum_A q_A \lambda^{ \frac{\alpha}{\alpha-1}} \frac{w_A^{ \frac{\alpha^2}{\alpha-1}}}{q_A^{ \frac{\alpha}{\alpha-1}}} \exp (- \frac{\alpha^2}{\alpha-1} S_\alpha(\sigma_A || \tau_A)  ) \frac{1}{w_A^\alpha} \exp ( \alpha S_\alpha(\sigma_A || \tau_A)  )\right) \\
= \frac{1}{\alpha} \log \left( \sum_A q_A  \frac{w_A^{ \frac{\alpha^2}{\alpha-1}}}{q_A^{ \frac{\alpha}{\alpha-1}}} \exp (- \frac{\alpha^2}{\alpha-1} S_\alpha(\sigma_A || \tau_A)  ) \frac{1}{w_A^\alpha} \exp ( \alpha S_\alpha(\sigma_A || \tau_A)  )\right)+ \frac{1}{\alpha} \log \left ( \lambda^{\frac{\alpha}{\alpha-1}} \right).
\end{multline}
Let us begin by considering the first term, which we call $\bar S_\alpha( \Phi || \Omega : \mathcal{A} ).$ We have 
\begin{multline}
\bar S_\alpha( \Phi || \Omega : \mathcal{A} ) = \frac{1}{\alpha} \log \left( \sum_A q_A  \frac{w_A^{ \frac{\alpha^2}{\alpha-1}}}{q_A^{ \frac{\alpha}{\alpha-1}}} \exp (- \frac{\alpha^2}{\alpha-1} S_\alpha(\sigma_A || \tau_A)  ) \frac{1}{w_A^\alpha} \exp ( \alpha S_\alpha(\sigma_A || \tau_A)  )\right) \\
 = \frac{1}{\alpha} \log \left( \sum_A q_A^{\frac{\alpha-1}{\alpha-1}-\frac{\alpha}{\alpha-1}}  w_A^{ \frac{\alpha^2}{\alpha-1}-\frac{(\alpha^2-\alpha)}{\alpha-1}}   \exp ( \left[\frac{\alpha^2-\alpha}{\alpha-1} - \frac{\alpha^2}{\alpha-1}  \right]S_\alpha(\sigma_A || \tau_A)  )\right) \\
  = \frac{1}{\alpha} \log \left( \sum_A q_A^{-\frac{1}{\alpha-1}}  w_A^{ \frac{\alpha}{\alpha-1}}   \exp (\frac{-\alpha}{\alpha-1} S_\alpha(\sigma_A || \tau_A)  )\right). 
\end{multline}
Next, we must solve for $\lambda$. We know that 
$$p_A  = \lambda^{ \frac{1}{\alpha-1}} \frac{w_A^{ \frac{\alpha}{\alpha-1}}}{q_A^{ \frac{1}{\alpha-1}}} \exp (- \frac{\alpha}{\alpha-1} S_\alpha(\sigma_A || \tau_A)  ),$$
$$\sum_A p_A =1.$$
Hence,
$$\lambda^{ \frac{1}{\alpha-1}} \sum_A \left[ \frac{w_A^{ \frac{\alpha}{\alpha-1}}}{q_A^{ \frac{1}{\alpha-1}}} \exp (- \frac{\alpha}{\alpha-1} S_\alpha(\sigma_A || \tau_A)  ) \right ] =1$$
so that 
$$\lambda^{ \frac{1}{\alpha-1}}  =\frac{1}{\sum_A \left[ \frac{w_A^{ \frac{\alpha}{\alpha-1}}}{q_A^{ \frac{1}{\alpha-1}}} \exp (- \frac{\alpha}{\alpha-1} S_\alpha(\sigma_A || \tau_A)  ) \right ]}.$$
The second term in $S_\alpha( \Phi || \Omega : \mathcal{A} )$ therefore becomes
$$\frac{1}{\alpha} \log \left ( \lambda^{\frac{\alpha}{\alpha-1}} \right) =  \log \left ( \lambda^{\frac{1}{\alpha-1}} \right) = -\log \left ( \sum_A  q_A^{ \frac{1}{\alpha-1}}w_A^{ \frac{\alpha}{\alpha-1}} \exp (- \frac{\alpha}{\alpha-1} S_\alpha(\sigma_A || \tau_A)  ) \right).$$
Finally, putting everything together, we obtain an expression for the algebraic relative Renyi entropy, 
\begin{multline}
S_\alpha( \Phi || \Omega : \mathcal{A} )=\bar S_\alpha( \Phi || \Omega : \mathcal{A} )+\frac{1}{\alpha} \log \left ( \lambda^{\frac{\alpha}{\alpha-1}} \right) \\
 = \frac{1}{\alpha} \log \left( \sum_A q_A^{-\frac{1}{\alpha-1}}  w_A^{ \frac{\alpha}{\alpha-1}}   \exp (\frac{-\alpha}{\alpha-1} S_\alpha(\sigma_A || \tau_A)  )\right)-\log \left ( \sum_A  q_A^{ \frac{1}{\alpha-1}}w_A^{ \frac{\alpha}{\alpha-1}} \exp (- \frac{\alpha}{\alpha-1} S_\alpha(\sigma_A || \tau_A)  ) \right) \\
 =\frac{1-\alpha}{\alpha} \log \left( \sum_A q_A^{-\frac{1}{\alpha-1}}  w_A^{ \frac{\alpha}{\alpha-1}}   \exp (\frac{-\alpha}{\alpha-1} S_\alpha(\sigma_A || \tau_A)  )\right).
\end{multline}
We now rewrite this in terms of the index $n$, related to $\alpha$ by $\alpha = \frac{n-1}{n} = 1-\frac{1}{n}$ so that $n=\frac{1}{1-\alpha}$ and $\frac{\alpha}{1-\alpha} = n-1.$ Thus, we find
$$
S_n( \Phi || \Omega : \mathcal{A} )
 =\frac{1}{n-1} \log \left( \sum_A q_A^{n}  w_A^{ 1-n}   \exp ((n-1)S_n(\sigma_A || \tau_A)  )\right),
$$
which is exactly the form of relative Renyi entropy obtained in the holographic error-correction setting above. 

\section{Numerical Tensor Network Calculations}

\begin{figure}
\includegraphics[width=\textwidth]{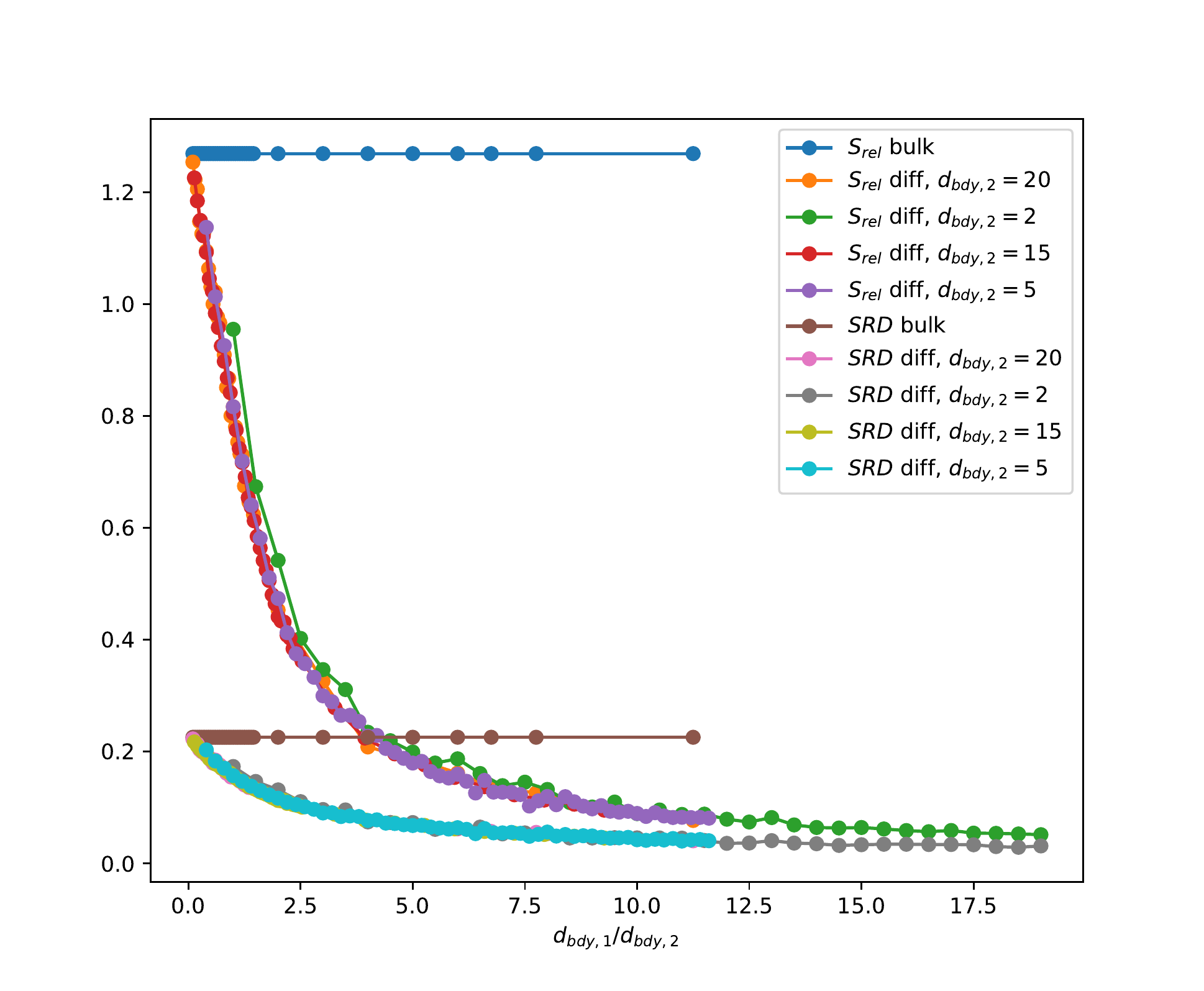}
\caption{}
\label{fig:S_rel_SRD_diff_ratio}           
\end{figure}

The discussion above established an equivalence between \textit{exact} equality of bulk and boundary sandwiched Renyi relative entropy and other entries in the AdS/CFT dictionary, such as the exact equality of bulk and boundary relative entropy as well as exact quantum error correction. Previous work has shown that approximate equality of bulk and boundary relative entropy can lead to bounds on approximate error correction~\cite{Approx}. Thus, it is of interest to have an understanding of approximate equality of bulk and boundary Renyi relative entropy. 

To this end, we consider random tensor networks as a model of AdS/CFT~\cite{RTN} (i.e., tensor networks where the tensors are drawn from probability distributions).  Random tensor networks have been a very useful tool for understanding features of the entanglement structure of AdS/CFT, such as the RT formula, the equality of bulk and boundary relative entropy, reflected entropy and so on~\cite{KFR,RTN, ReflectedEntropy}.  

We study a simple random tensor network with 1 tensor, 1 bulk qudit, whose Hilbert space has dimension $d_{bulk},$ and two boundary qudits, whose Hilbert spaces have dimensions $d_{bdy,1}$ and $d_{bdy,2}$. We consider two states $\rho_b, \sigma_b$ on the bulk qudit. We can calculate, for example, the bulk relative entropy between these states as usual. To find the corresponding boundary quantity, we proceed as follows. Let $\ket{0_x}$ be a fixed state on all the qudits. We then generate a Haar-random unitary $U$, and let $\ket{V_x} = U \ket{0_x}$. We then calculate the corresponding boundary state corresponding to our bulk state $\rho_b$:
$$\rho_{bdy} = Tr_{bulk} ( \rho_{bulk} \ket{V_x} \bra{V_x} ),$$
where $Tr_{bulk}$ is a trace over the bulk leg. The state on, e.g., boundary qudit 1 can obtained by taking a further partial trace $\rho_1 = Tr_2 \rho_{bdy}.$ To obtain the sandwiched Renyi divergence on boundary qudit 1, we compute the SRD of $\rho_1$ and $\sigma_1$, and average over Haar-random unitary matrices $U$.

We now consider specific cases, setting $d_{bulk} = 2$ throughout. First, we calculate the difference between the bulk relative entropy and the (averaged) relative entropy of the states on boundary qudit 1 for various values of the boundary bond dimensions, and the difference between the bulk SRD and the (averaged) SRD on boundary qudit 1 for Renyi index $\alpha = 0.2$. The results are shown in Figure~\ref{fig:S_rel_SRD_diff_ratio}. From this figure, we see that these differences only depend on the ratio of our boundary bond dimensions $x \equiv \frac{d_{bdy,1}}{d_{bdy,2}}.$ As we increase $x$, the difference converges to 0, so that equality of bulk and boundary relative entropy (as well as bulk and boundary SRD) holds.  

In Figure~\ref{fig:SRD_diff_dbdy2_20_diff_alpha}, we plot the SRD difference as a function of $d_{bdy,1}$ (with $d_{bdy,2}=20$) for two different values of $\alpha=0.8,7.4.$ The $\alpha=7.4$ difference seems to fall off somewhat faster.

\begin{figure}
\includegraphics[width=\textwidth]{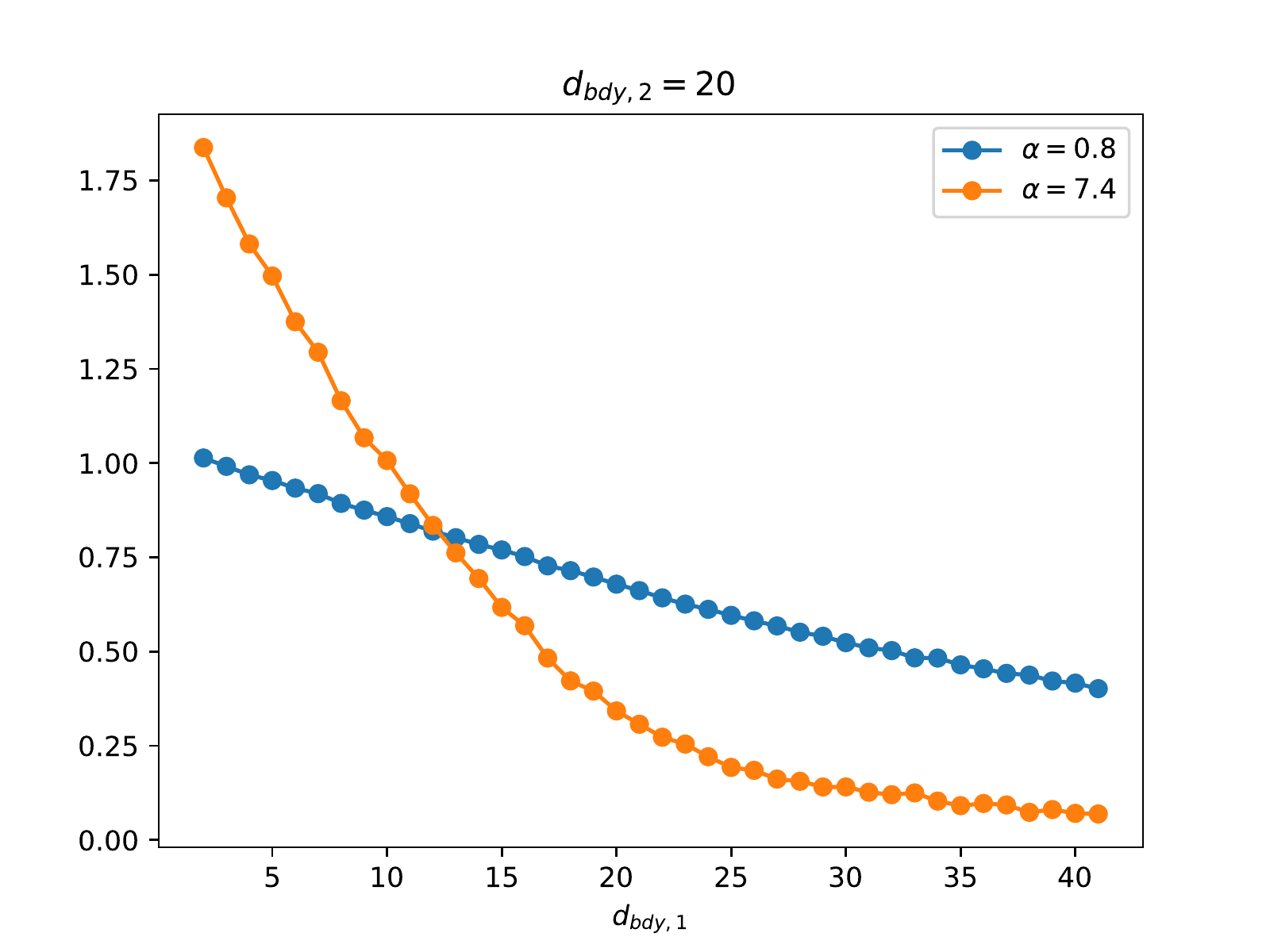}
\caption{}
\label{fig:SRD_diff_dbdy2_20_diff_alpha}           
\end{figure}

In Figure ~\ref{fig:SRD_alpha_d_bdy1_6_d_bdy2_2}, we plot the bulk and boundary qudit 1 SRD (and their difference) as a function of Renyi index $\alpha$, with $d_{bdy,1}=6, d_{bdy,2}=2$. In addition, we plot the data from Figure~\ref{fig:S_rel_SRD_diff_ratio} on a log-log plot, shown in Figure~\ref{fig:SRD_S_rel_log_plot}. 
It seems that the SRD difference is has a peak close to (but not equal to) $\alpha = 1$. At $\alpha=1$, of course, the SRD is the usual relative entropy.  Thus, there are values of the index $\alpha$ where the difference in SRD is smaller than the difference in relative entropy. It would be very interesting to see if these leads to new bounds on, for example, approximate reconstruction of bulk operators. We leave such investigations for future work.


\begin{figure}
\includegraphics[width=\textwidth]{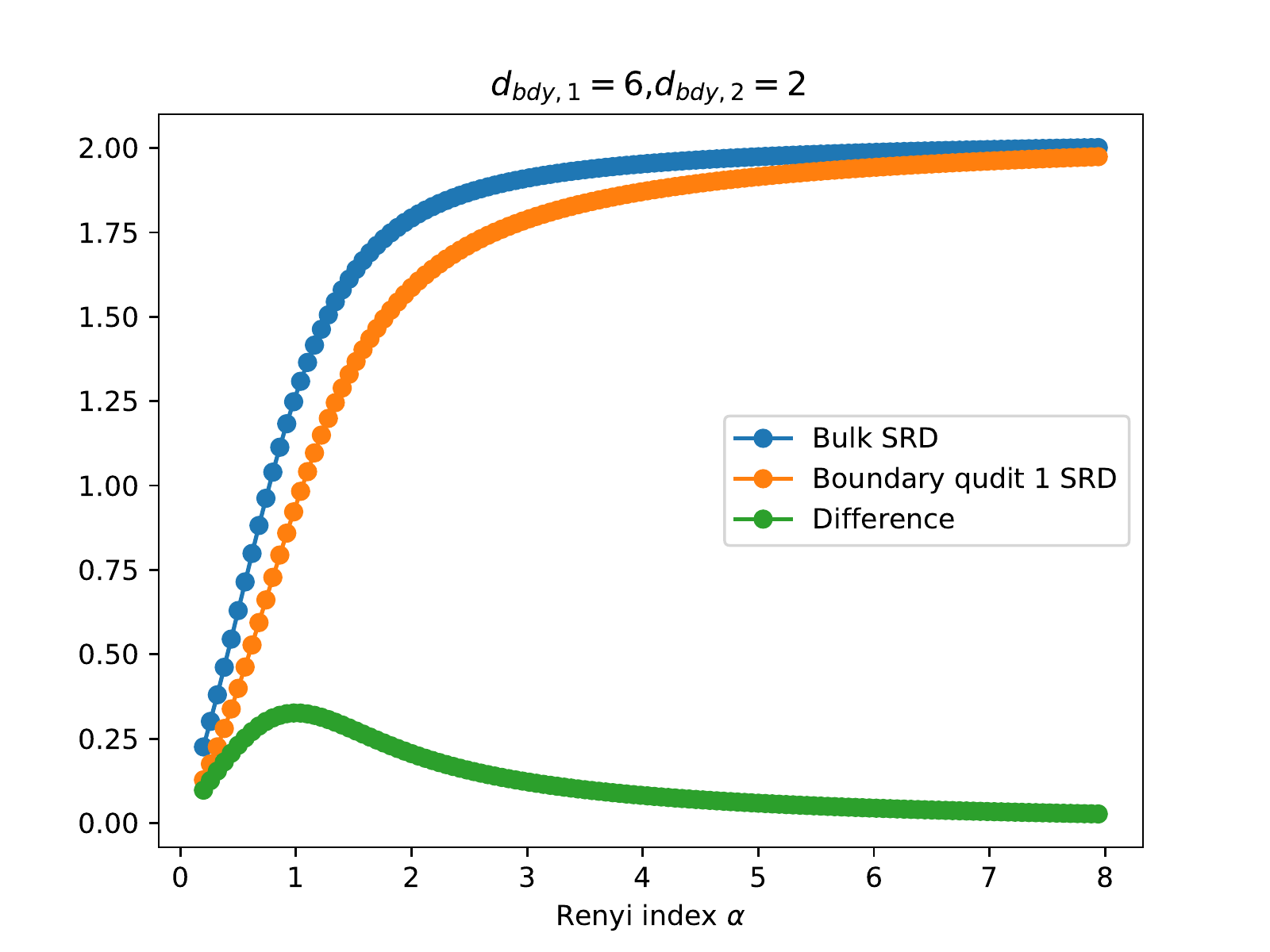}
\caption{}
\label{fig:SRD_alpha_d_bdy1_6_d_bdy2_2}           
\end{figure}



\begin{figure}
\includegraphics[width=\textwidth]{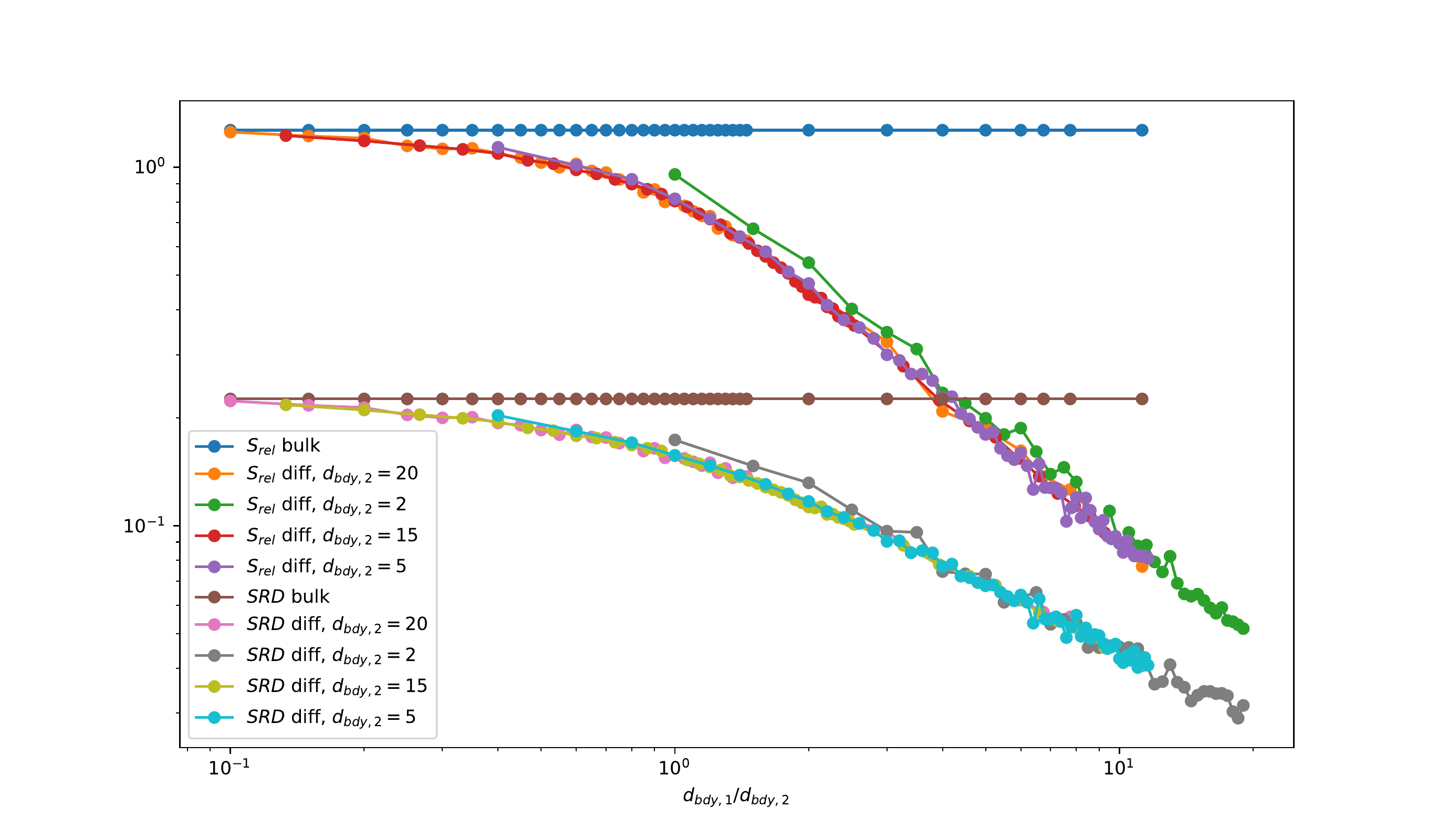}
\caption{}
\label{fig:SRD_S_rel_log_plot}           
\end{figure}


%
%
%
%

%

\section{Conclusions} 

In this paper, we have explored the role of the sandwiched Renyi relative entropy in AdS/CFT.  In particular, we have shown that the equivalence of bulk and boundary sandwiched Renyi relative entropy is equivalent to the RT formula, algebraic encoding, subregion duality, and the equivalence of bulk and boundary relative entropy, expanding the equivalence theorem (established in ~\cite{HarlowQEC}) of the latter four statements. We then discussed the Renyi relative entropies from the perspective of modular operators. In the context of finite-dimensional von Neumann algebras, this algebraic definition of the sandwiched Renyi relative entropy was shown to reduce to the form found in the context of the holographic error-correction setting. Finally, we explored numerical calculations of the sandwiched Renyi relative entropy in a simple random holographic tensor network.

There are several possible avenues for further investigation. Previous work has shown that the corrections to the equality of bulk and boundary relative entropy can bound errors on the reconstruction of low-energy bulk operators~\cite{Approx} using the twirled Petz map. It would interesting to see if the corrections to the sandwiched Renyi relative entropies can place a similar bound on the accuracy of the reconstruction of low-energy bulk operators, perhaps using some channel other than the twirled Petz map. In particular, it would be of great interest if there were situations in AdS/CFT where the bounds on the differences between bulk and boundary SRD's lead to a more accurate reconstruction of bulk operators than the bound on the difference between bulk and boundary relative entropies.  It would also be interesting to continue to do numerical simulations on larger tensor networks, in addition to investigating possible analytic results on sandwiched Renyi relative entropies in various tensor network models of holography.

\acknowledgments

I am deeply grateful to Jonah Kudler-Flam, Mudassir Moosa, and Pratik Rath for collaboration during the early stages of this project, as well as for fruitful discussions. I would also like to thank Yasunori Nomura for useful conversations and for his continued guidance. Helpful comments on this manuscript from Pratik Rath are gratefully acknowledged.  I am supported by the U.S. Department of Energy under grant Contract Number DE-SC0019380.



\end{document}